\documentclass[12pt,epsfig,amsmath,amssymb,pstricks]{article}
\setlength{\evensidemargin}{0.0cm}
\setlength{\oddsidemargin}{0.0cm}
\setlength{\topmargin}{-1.5cm}
\setlength{\textheight}{8.7in}
\setlength{\textwidth}{16.3cm}
\setlength{\parindent}{0mm}
\setlength{\parskip}{\medskipamount}

\usepackage{amsthm}
\usepackage{graphicx}
\usepackage{xcolor}
\newtheorem{Theorem}{Theorem}
\newtheorem{Lemma}{Lemma}

\begin{document}

\title{\bf Sharing nonfungible information requires \\
shared nonfungible information}

\author{{Chris Fields$^1$ and Antonino Marcian\`{o}$^2$}\\ \\
{\it$^1$ 23 Rue des Lavandi\`{e}res}\\
{\it 11160 Caunes Minervois, FRANCE}\\
{fieldsres@gmail.com}\\ \\
{\it$^2$ Center for Field Theory and Particle Physics \& Department of Physics} \\
{\it Fudan University, Shanghai, CHINA} \\
{marciano@fudan.edu.cn}
}
\maketitle

\begin{abstract}
We show that sharing a quantum reference frame requires sharing measurement operators that identify the reference frame in addition to operators that measure its state.  Observers restricted to finite resources cannot, in general, operationally determine that they share such operators.  Uncertainty about whether system-identification operators are shared induces decoherence.
\end{abstract}

\textbf{Keywords:} Classical communication; LOCC protocol; Measurement; Quantum Darwinism; Quantum reference frame

\section{Introduction}

Bartlett, Rudolph and Spekkens~\cite{bartlett:07} define \emph{nonfungible} quantum information as information that Alice and Bob can share only by exchanging a quantum (i.e., physical) system. This is in contradistinction to \emph{fungible} information that Alice can communicate to Bob by sending a string of classical bits, e.g., a bit string that describes a system or encodes a measurement outcome.  Quantum systems implementing reference frames (quantum reference frames or QRFs~\cite{aharonov:84}), e.g., physically-implemented length and angle standards, clocks, gyroscopes, and standardized charges, are canonical examples of physically-encoded nonfungible information.  {Angelo et al. have shown~\cite{angelo:11} that such systems must be described with care to prevent unnoticed classical assumptions, particularly assumptions of separability, from introducing paradoxes even when only a single experiment, observer, and QRF are considered.}

{In practice, we are concerned not only with the formal description of a QRF, but also with the use of a QRF by an observer to make a measurement.}  Much of~\cite{bartlett:07} and of the broader literature is dedicated to developing methods for sharing fungible information even in the absence of shared QRFs, {i.e., methods that assume each observer employs only local QRFs that are available \emph{a priori}}.  The~physical operations required to share a QRF remain largely neglected, and have yet to be rigorously~characterized.  

To serve as a QRF, a physical system must have a designated pointer state that conveys classical reference information, e.g., spatial orientation if the QRF is a Cartesian frame; we consider the ``state'' of the QRF to be this designated pointer state.  Alice and Bob can exchange a QRF only if Bob can unambiguously identify the physical system received from Alice and measure the same (pointer) state of that system that Alice has previously measured or prepared.  Here we consider the process of~sharing a QRF in an operational setting, considering in particular the operations by both parties that are required to both identify the QRF as a system and determine its state, e.g., its spatial orientation if it is a Cartesian frame.  Employing the methods and the results of~\cite{fields:18}, we characterize the process of~identifying a QRF explicitly in terms of the measurement operators, i.e., observables employed to distinguish the QRF from its surrounding environment, including the other systems present in the laboratory.  We then show that the physical implementations of these observables encode nonfungible information.  Pointer state outcomes specifying the relative states of QRFs, which as shown in~\cite{bartlett:07} can be inferred without physically sharing the QRFs, are defined only with respect to such nonfungible, measurement-operator encoded information.  Hence the nonfungible information encoded by the physical implementations of QRF-identifying observables must already be shared either to exchange a~QRF or to infer the relative states of non-shared QRFs.  

Whether Alice and Bob implement the same observables cannot be determined operationally with finite resources, either by Alice and Bob or by a third party.  Hence Alice and Bob must respect a~superselection rule or, equivalently, experience decoherence~\cite{bartlett:07} {(see~\cite{angelo:11} for an explicit analysis of such decoherence even in a setting involving only three particles)}.  We suggest that the shared ``classical reality'' of the laboratory emphasized by Bohr~\cite{bohr:37} can be attributed to this unavoidable uncertainty about shared observables.

\section{System Identification Formalism}

We consider the situation shown in Figure~\ref{fig.1}: Alice makes measurements at her location, then sends Bob both her measurement outcomes (as fungible information) and (a token of) her Cartesian frame, which Bob employs, together with his local Cartesian frame, to make measurements at his distant location.  Avoiding no-cloning restrictions requires that if this token is in a pure state, that state is distinct from the state of Alice's local Cartesian frame; we assume for simplicity that the shared token is in a mixed state, and justify this assumption below.  We also assume explicitly that the process of transferring the token to Bob does not change its state, then show in \S 5 below how uncertainty about the state can nonetheless be introduced by the sharing protocol.  We ask how both Alice and Bob \emph{identify} the transferred QRF as a physical system, i.e., distinguish it from the surrounding environment, including whatever else is contained within their respective laboratories, and how they then determine the pointer state of the shared reference frame, e.g., its orientations with respect to their respective local Cartesian frames.  Alice and Bob can share nonfungible reference-frame information by exchanging a~QRF token only if they both identify the same physical system as the shared token; if Bob receives, identifies, and measures the state of a different system from the one Alice identified, measured or prepared, and sent, the intended nonfungible information clearly has not been successfully shared. %please confirm if it should be Section 5?

\begin{figure}[h]
\centerline{\includegraphics[width=6.0in]{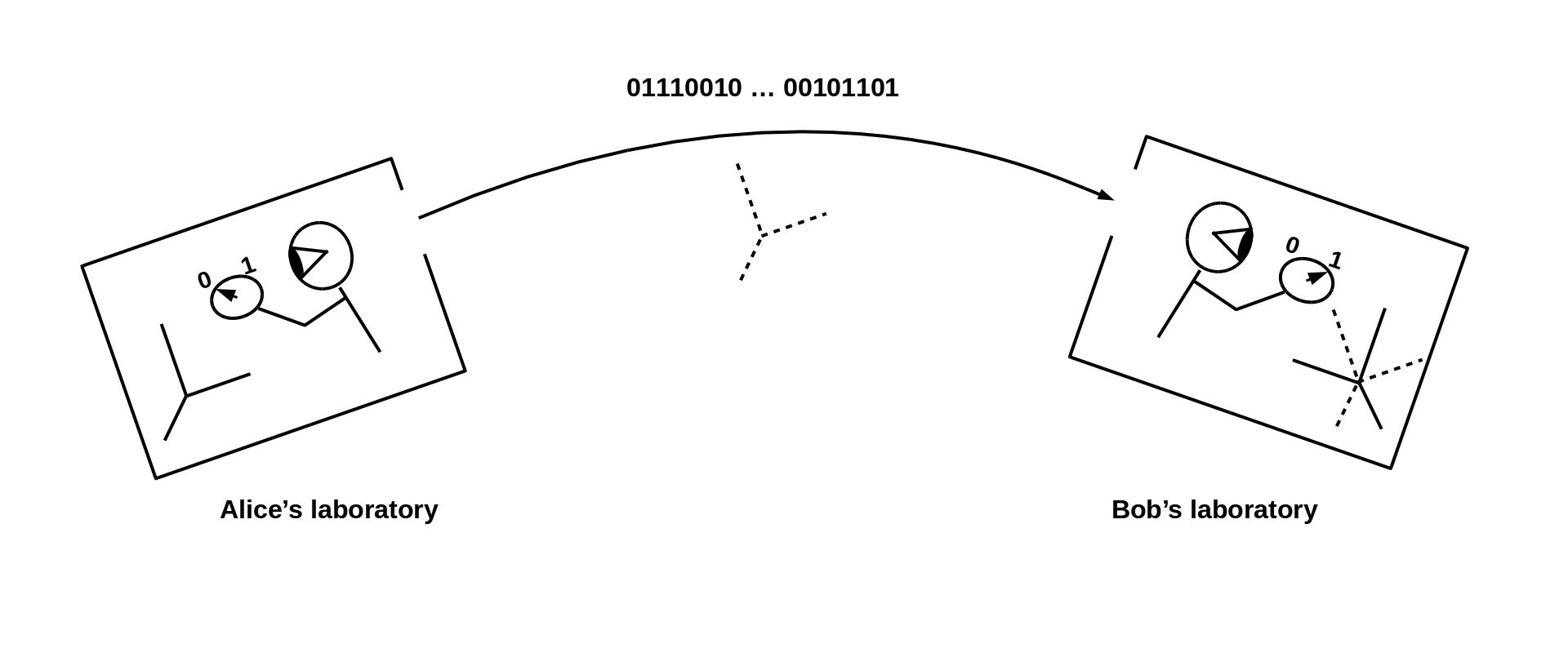}}
\vspace*{8pt}
\caption{Alice sends distant Bob a fungible encoding of her observational outcomes and a nonfungible token (dashed lines) of her local Cartesian frame.  Both Alice and Bob must identify the same token for the sharing protocol to be successfully executed.}
\label{fig.1}
\end{figure}

Letting $k = A$ or $B$  and following the methods of~\cite{fields:18}, we first consider partitions of ``everything'' $U$ into an observer $k$ and that observer's ``world'' $W_k$, i.e., everything with which that observer can interact.  For a non-relativistic system, we can write $U = AW_A = BW_B$ and we can consider Hilbert spaces $\mathcal{H}_U = \mathcal{H}_k \otimes \mathcal{H}_{W_k}$ and interaction Hamiltonians $H_U = H_k + H_{W_k} + H_{kW_k}$; relativity requires identifying $W_k$ with the union of $k$'s past and future lightcones, but does not change what follows.  Consider each observer to interact with its respective world by deploying Hermitian operators $M^k_i$, choosing bases in which these operators have binary eigenvalues and hence correspond to ``questions to Nature'' with yes--no answers~\cite{wheeler:83}.  We assume the observers have only finite resources and hence finite numbers of such operators; for simplicity, we assume a common total number of operators $N$.  In~all realistic situations, observers have limited knowledge of and operational control over the worlds with which they interact, i.e., $N \ll dim(W_k)$ for any observer $k$.  In this case, we can write, for~each~observer:

\begin{equation}
H_{kW_k} = \beta^k k_B T^k \sum_i \alpha^k_i M^k_i,
\end{equation}  
where the $\alpha^k_i \in [0,1]$ are such that $\sum_i \alpha^k_i = 1$, $k_B$ is Boltzmann's constant, $T^k$ is $k$'s temperature, and $\beta^k \geq$ ln 2 is a measure of $k$'s thermodynamic efficiency that depends on the internal dynamics $H_k$.  The idea that Nature ``answers'' an observer's questions is classical, and implies an irreversible state change~\cite{landauer:61}: Each question from $k$ that $W_k$ ``answers'' transfers one bit from $W_k$ to $k$ and is paid for by the transfer of $\beta^k k_B T^k$ from $k$ to $W_k$.  The action to transfer $N$ bits in time $\Delta t$ is:

\begin{equation}
\int_{\Delta t} dt (\imath \hbar) \mathrm{ln} \mathcal{P}_k(t) = N \beta^k k_B T^k \Delta t
\end{equation} 
where $\mathcal{P}_k = \exp{-(\imath /\hbar) H_{kW_k} t}$.

We can now ask how each observer allocates their limited observational resources to the tasks of~identifying and characterizing the pointer states of the various ``systems'' embedded in their respective worlds.  Consistent with the arbitrary tensor-product decomposability of $W_k$, we make no assumption that such systems are ``ontic'' but rather consider them to be defined by $k$'s activity of~measurement~\cite{fields:18, grinbaum:17}.  Let $X_k$ be the Cartesian QRF to be shared as identified by $k$ and $Y_k$ be $k$'s local Cartesian frame.  We can write $k$'s system-identification observables as $M^{Xk}_j$ and $M^{Yk}_j$, suppressing redundant subscripts $k$ and again assuming equal total numbers $M$, $2M < N$ of $X_k$- and $Y_k$-identifying operators for simplicity.  Similarly, $k$'s pointer observables for determining the state, e.g., orientation, of $X_k$ and $Y_k$ are $M^{Pk}_l$ and $M^{Qk}_l$, with a common number of operators $M^\prime < M$ for each pointer state.  We explicitly assume that the $M^{Pk}_l$ and $M^{Qk}_l$ act on whatever system is identified by the $M^{Xk}_j$ and $M^{Yk}_j$,~respectively.

To identify the shared QRF and determine its state with respect to their own local Cartesian frame, both Alice and Bob must execute the cycle of measurements shown in Figure~\ref{fig.2}.  Alice first identifies her local frame and measures its state, then identifies the token QRF to be transferred and either measures its state as, or prepares its state to be, indistinguishable within her measurement resolution from that of her local frame.  As Alice's measurement resolution is finite, this state will in general be~mixed.  Bob receives and identifies a token, which for successful communication must be the same one transferred, measures its state, and then identifies and measures the state of his local frame to make the comparison.  At each step in the cycle, all degrees of freedom not being measured in that step are part of the ``environment'' for the measurements being made~\cite{tegmark:12}.  This redefinition of~the~environment between measurements implements decoherence~\cite{fields:19}, a point we will return to in \S 5 below.
 %please confirm if it should be Section 5?

If it is assumed \emph{a priori} that Alice and Bob share both system-identification and pointer-state operators, i.e., if $\{ M^{XA}_j \} = \{ M^{XB}_j \}$ and $\{ M^{PA}_l \} = \{ M^{PB}_l \}$, then $X_A = X_B$ and, assuming as above that transmission of the QRF does not change its pointer state, $| X_A \rangle = |X_B \rangle$.  Hence if it is assumed \emph{a priori} that Alice and Bob share both system-identification and pointer-state operators, they can exchange nonfungible information by exchanging the QRF $X$.  From an operational perspective, however, this cannot be assumed.  The question of interest in this case is whether Alice and Bob can determine, by~finite observations, that they have successfully exchanged $X$, i.e., that $X_A = X_B$.  In particular, can Bob determine by finite observations that the physical system $X_B$ identified by his operators $M^{XB}_j$ is the same physical system $X_A$ that Alice identified using her operators $M^{XA}_j$?  Without additional information from Alice, clearly the answer is no.  Alice sending additional nonfungible information in the form of additional physical systems that require identification by Bob, moreover, merely leads to infinite regress.  Hence the operational question is: Can Bob determine by finite observations that $X_A = X_B$ given additional \emph{fungible} information from Alice.  We show below that the answer is no.

\begin{figure}[h]
\centerline{\includegraphics[width=3.0in]{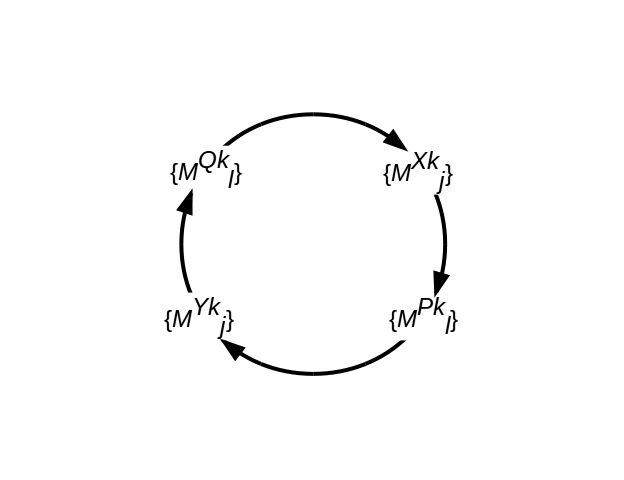}}
\vspace*{8pt}
\caption{Alice and Bob each cycle through identifying ($\{ M^{Xk}_j \}$ and $\{ M^{Yk}_j \}$) and then determining the pointer state of ($\{ M^{Pk}_l \}$ and $\{ M^{Qk}_l \}$) their shared ($X_k$) and local ($Y_k$) Cartesian frames.}
\label{fig.2}
\end{figure}

\section{Fungible Information Is Insufficient for System Identification}

Bob's problem is to determine, given his identified system $X_B$ and a fungible description of Alice's identified system $X_A$, whether $X_A = X_B$.  By ``fungible'' here and below we also understand ``obtained by finite observations.''  The following shows that this problem cannot be solved.

\begin{Theorem}
Given a system $X_B$, no fungible description of a system $X_A$ previously observed at some distant location is sufficient to determine whether $X_A = X_B$.
\end{Theorem}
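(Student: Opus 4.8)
The plan is to show that a fungible description, which by the convention of \S 3 is a finite classical bit string obtainable from finitely many binary observations, radically underdetermines the physical system it purports to pick out, so that it is compatible both with $X_A = X_B$ and with $X_A \neq X_B$; Bob therefore cannot decide between the two alternatives on the basis of such a description alone. Accordingly I would argue by contradiction: suppose some fungible description $D$ of $X_A$ suffices, and exhibit two physically distinct situations that present Bob with the same $D$ and the same operator set $\{ M^{XB}_j \}$ yet differ as to whether $X_A = X_B$.

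First I would fix what Bob has available. Bob has identified a physical system $X_B$ by deploying his $M$ operators $M^{XB}_j$ on his world $W_B$, and he has received from Alice a finite bit string $D$ encoding the outcomes of her finitely many binary measurements $M^{XA}_j$, possibly together with a finite-precision listing of the matrix elements of those operators in some previously agreed basis. Any test Bob can perform is a function of $D$ and of his own fixed operators $\{ M^{XB}_j \}$; nothing else about Alice's laboratory reaches him as fungible information. Hence it is enough to produce the pair of scenarios described above.

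The crux is that construction, and here I would lean on the formalism of \S 2 together with the arbitrariness of the tensor-product decomposition of $W_k$ and the position, following \cite{fields:18, grinbaum:17}, that a ``system'' is not ontic but is constituted by the observer's activity of measurement. The key point is that a classical description of a measurement operator --- its matrix elements in some basis, even to arbitrary finite precision --- does not fix the \emph{physical implementation} of that operator, i.e., which actual degrees of freedom of $W_k$ the corresponding apparatus couples to; and a finite record of binary outcomes does not fix it either, since a continuum of tensor factorizations of $W_B$ is consistent with any such finite data. One can therefore arrange formally identical operators $M^{XA}_j = M^{XB}_j$ acting on physically distinct factors, or inequivalent operators acting on a common factor, without changing the outcome record $D$. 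In one such scenario the degrees of freedom Alice's operators couple to coincide with those Bob's operators couple to ($X_A = X_B$); in another, compatible with the very same $D$, they do not ($X_A \neq X_B$). Since Bob's decision procedure receives literally identical inputs $(D, \{ M^{XB}_j \})$ in the two scenarios, it returns the same verdict in both and so is incorrect in one of them, which is the desired contradiction.

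Finally I would observe that this gap cannot be closed by augmentation: enlarging $D$ with more fungible information reproduces the same underdetermination, while supplementing it with further physical tokens merely obliges Bob to identify those tokens as well, the infinite regress already flagged in \S 2. I expect the main obstacle to be the construction in the preceding paragraph: phrasing ``a finite operator-and-outcome record is compatible with inequivalent system identifications'' precisely, in terms faithful to the Hamiltonian decomposition $H_{kW_k} = \beta^k k_B T^k \sum_i \alpha^k_i M^k_i$ of \S 2 and to the results of \cite{fields:18}, rather than appealing informally to ``arbitrary decompositions.'' It may be cleanest to isolate that fact as a separate lemma before invoking it here.
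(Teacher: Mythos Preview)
Your proposal is coherent and would likely go through, but it takes a genuinely different route from the paper. The paper does exactly what you suggest in your last sentence --- it isolates a lemma --- but the \emph{content} of that lemma is not the underdetermination claim you have in mind. The paper's Lemma~1 is a thermodynamic impossibility result on Alice's side: to decide, for an arbitrary degree of freedom $\phi$ of $W_A$, whether $\phi$ belongs to $X_A$, Alice would have to deploy so many operators that the heat dumped into $W_A$ (via Equation~(2) with $N \to \dim(W_A)$) violates the predictability sieve condition~(3), destroying the separability that allowed her to identify $X_A$ in the first place. Since ``fungible'' is stipulated to mean ``obtained by finite observations,'' Alice can never construct a description that specifies the degrees of freedom of $X_A$, and Theorem~1 follows in one line.

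So the contrast is this: you argue that any finite bit string $D$ underdetermines the physical system it names, by exhibiting two scenarios Bob cannot distinguish; the paper argues that Alice cannot, for thermodynamic reasons, ever \emph{produce} a description that pins down $X_A$'s degrees of freedom, so the question of what Bob could do with such a description never arises. Your approach is more model-theoretic and arguably more general --- it does not lean on the back-action argument of~\cite{fields:18} --- but, as you yourself flag, the hard part is making the two-scenario construction precise within the $H_{kW_k}$ formalism, whereas the paper's route offloads all the work onto a result already established in~\cite{fields:18}. If you want to align with the paper, replace your underdetermination lemma with the thermodynamic one; if you want to keep your route, the construction you sketch needs to be made explicit rather than gestured at via ``arbitrary tensor decompositions.''
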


To prove this, we focus on how the fungible description is constructed by the distant observer $A$.  The description of $X_A$ can be sufficient to determine whether $X_A = X_B$ only if, for any degree of freedom $\phi$ in $W_A$, it specifies whether $\phi$ is a degree of freedom of $X_A$.  To achieve this, $A$'s operators $M^{XA}_j$ must be sufficient to determine whether any given $\phi$ is a degree of freedom of $X_A$.  Let $\overline{X_A}$ be such that $X_A \overline{X_A} = W_A$. Abusing the notation slightly, we can write their action as $M^{XA}_j: X_A \rightarrow 1$ for all $j$ and $M^{XA}_j: \overline{X_A} \rightarrow 0$ for some $j$ (all $j$ if the $M^{XA}_j$ are a minimal nonredundant set).  Being able to identify $X_A$ at multiple times requires that the $M^{XA}_j$ satisfy the predictability sieve condition~\cite{zurek:03}:

\begin{equation} \label{sieve}
[M^{XA}_j, H_{W_A}] = 0~~\forall j.
\end{equation}
\nolinebreak
The system $X_A$ must, in other words, be separable from $\overline{X_A}$ both before and after $A$'s measurement, and the measurement must not disturb $W_A$ by more than the measurement resolution.  Equivalently, $A$ has an operator $M^{\overline{X}}$ such that $M^{\overline{X}}: \overline{X_A} \rightarrow 1$ and $M^{\overline{X}}: X_A \rightarrow 0$ that satisfies $[M^{XA}_j, M^{\overline{X}}] = 0~~\forall j$.  Determining whether any given $\phi$ is a degree of freedom of $X_A$ is then determining whether $M^{\overline{X}}(\phi) = 0$ or $1$.

\begin{Lemma}
An observer $A$ cannot determine, by finite observation, whether any arbitrary degree of freedom $\phi$ is a degree of freedom of a specified system $X_A$.
\end{Lemma}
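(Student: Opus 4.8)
The plan is to leverage the reduction established just above: since $\phi$ is a degree of freedom of $X_A$ precisely when $M^{\overline{X}}(\phi) = 0$ or $1$, where $M^{\overline{X}}$ is $A$'s complement observable (with $M^{\overline{X}}: \overline{X_A} \rightarrow 1$, $M^{\overline{X}}: X_A \rightarrow 0$ and $[M^{XA}_j, M^{\overline{X}}] = 0$ for all $j$), it suffices to show that no finite protocol lets $A$ fix the value of $M^{\overline{X}}$ on an \emph{arbitrary} $\phi \in W_A$. I would first recall the resource accounting established above: $A$'s entire observational repertoire is the fixed finite family $\{ M^A_i \}$ of $N$ binary operators, with $N \ll dim(W_A)$, and by the action identity for transferring $N$ bits each deployed operator yields exactly one bit at the cost $\beta^A k_B T^A$, so any terminating protocol returns only finitely many bits about $W_A$; the complement observable $M^{\overline{X}}$ is likewise physically implemented within these same bounds.

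The core step is a counting argument pitting a finite apparatus against an unbounded world. To decide operationally whether $\phi$ is a degree of freedom of $X_A$, $A$ must single $\phi$ out from among the degrees of freedom of $W_A$ and then read off $M^{\overline{X}}(\phi)$; but the degrees of freedom of $W_A$ vastly outnumber $N$, so there are---generically---degrees of freedom of $W_A$ on which every member of $\{ M^A_i \} \cup \{ M^{\overline{X}} \}$ acts trivially. For such a $\phi$, no finite outcome string generated by $A$'s operators can distinguish $M^{\overline{X}}(\phi) = 1$ from $M^{\overline{X}}(\phi) = 0$, and indeed $A$ cannot even set up a measurement that isolates $\phi$ in the first place, since locating one degree of freedom among the $\sim dim(W_A)$ available would already exhaust far more than $A$'s $N$-bit budget. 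Choosing $\phi$ to be one of these unreachable degrees of freedom exhibits an arbitrary $\phi$ whose membership in $X_A$ cannot be decided by $A$ through finite observation, which is the claim.

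The step I expect to be the main obstacle is making the quantifier in ``any arbitrary degree of freedom $\phi$'' airtight. One must block the objection that $A$ could just include among the $M^A_i$ an operator tailored to the particular $\phi$ at issue: the Lemma is to be read adversarially---for \emph{every} finite allocation of $A$'s $N$ operators there exists some $\phi \in W_A$ lying outside their joint reach---so that no single finite apparatus can be prepared for all $\phi$ at once. A secondary subtlety to be handled with care is that, since systems here are not assumed ontic but are defined by the measurements deployed (following \cite{fields:18}), the predicate ``$\phi$ is a degree of freedom of $X_A$'' is itself only well-posed relative to a choice of operators; an arbitrary $\phi$ need not be acted on nontrivially by any operator $A$ in fact holds, and this absence of a prior fact of the matter is, once again, precisely why the determination cannot be carried out with finite resources.
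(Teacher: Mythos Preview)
Your argument is coherent but takes a genuinely different route from the paper's. You run a pure counting/addressability argument: with $N \ll \dim(W_A)$ binary operators, there must exist degrees of freedom $\phi$ that no finite outcome string can isolate, so $M^{\overline{X}}(\phi)$ cannot be read off. The paper instead runs a \emph{thermodynamic back-reaction} argument: it allows $A$ to attempt the limit $N \rightarrow \dim(W_A)$, then uses Equation~(2) to compute the heat dumped into $W_A$ (their Equation~(4)), and argues that this heat necessarily violates the predictability-sieve condition~(3) at the next time step, destroying the separability of $X_A$ and hence the very question ``is $\phi$ a degree of freedom of $X_A$?''. Your approach is more elementary and does not need the action identity or the sieve condition at all; the paper's approach is more tightly coupled to the physical framework of \S2 and exhibits a concrete operational failure mode (loss of separability under measurement back-action) rather than a mere information shortfall. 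One caution on your version: the claim that $M^{\overline{X}}$ itself ``acts trivially'' on some $\phi$ sits awkwardly with the paper's definition of $M^{\overline{X}}$ as a total $\{0,1\}$-valued map on $W_A$; your real point is that $A$ cannot \emph{address} an arbitrary $\phi$ so as to evaluate $M^{\overline{X}}$ there, and you should phrase it that way rather than as triviality of the operator.
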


\begin{proof}[Proof of Lemma 1]
We take the proof from that of~\cite{fields:18} Theorem 1.  To determine whether any arbitrary degree of freedom $\phi$ is a degree of freedom of the specified system $X_A$, $A$ must, in the limit, examine every potential $\phi$, i.e., every degree of freedom of $W_A$.  This requires progressively refining the observation of $W_A$ by employing additional measurement operators, with the number of operators $N \rightarrow dim(W_A)$ in the limit.  In this limit, Equation (2) becomes:

\begin{equation}
\int_{\Delta t} dt (\imath \hbar) \mathrm{ln} \mathcal{P}_k(t) \rightarrow dim(W_A) \beta^k k_B T^k \Delta t.
\end{equation}

Now consider two measurements made at $t$ and at $t + \Delta t$, after the heat given by Equation (4) from the measurements initiated at $t$ has been dumped into $W_A$.  However this heat is distributed in $W_A$, the predictability sieve condition (3) will be violated for at least one of the operators $M^{XA}_j$ at $t + \Delta t$.  It is, however, Equation (3) {that assures that $X_A$ is in a separable state and hence} allows $A$ to identify $X_A$.  {If Equation (3) and hence separability fail,} $A$'s attempt to re-identify $X_A$ at $t + \Delta t$ will fail as well.  In this case, $A$ cannot determine whether $\phi$ is a degree of freedom of $X_A$.
\end{proof}
Lemma 1 shows that $A$ cannot determine, by finite observation, the value that the physically implemented operator $M^{\overline{X}}$ assigns to any arbitrary $\phi$.  Equivalently, $A$ cannot determine by finite observation that the operators $M^{XA}_j$ satisfy Equation (3), as doing so requires full operational control over $H_{W_A}$.  Hence $A$ cannot determine by finite observation that the $M^{XA}_j$ pick out the same system $X_A$ at successive times, as indeed follows from even a classical analysis of system identification under finite-resource constraints (\cite{moore:56} Theorem 2; see~\cite{fields:18} for discussion).

Given Lemma 1, proving Theorem 1 is trivial:

\begin{proof}[Proof of Theorem 1]
By Lemma 1, no fungible description of $X_A$ is sufficient to specify the degrees of freedom of $X_A$.  Hence no such description is sufficient to determine that $X_A = X_B$, for any $X_B$.
\end{proof}

Lemma 1 clearly applies to Bob as well as to Alice; Bob is equally unable to determine, by finite observation, the degrees of freedom of the system $X_B$ that he has received.  Hence even given an \emph{a priori} stipulation, instead of a description from observation, of the degrees of freedom of $X_A$, Bob cannot determine that $X_A = X_B$.  This situation is clearly no different if Alice and Bob both receive tokens, with or without accompanying descriptions, from some third party.  Unless it is assumed \emph{a priori} that Alice and Bob share both system-identification and pointer-state operators, they face unresolvable uncertainty about whether they are in fact identifying the same physical system $X$ or measuring the same pointer state $|X \rangle$, i.e., they face unresolvable uncertainty about whether they in fact share a QRF.  In the language of~\cite{bartlett:07}, this is ``passive'' uncertainty.  It cannot, however, be distinguished operationally from ``active'' uncertainty about whether the unobserved (or discontinuously observed) process of transferring a single, well-defined token $X$ from Alice to Bob changed its state.

\textbf{Example 1:}  System identification is typically treated as unproblematic in discussions of QRF sharing (e.g.,~\cite{bartlett:07}).  Consider, however, an adversarial situation in which a malicious third party intercepts the transferred QRF and substitutes a distinct physical system $X^\prime$ for $X_A$.  How much fungible information must Alice provide to assure that Bob can detect the substitution?  If Alice and Bob already share QRFs, instructions to perform some set of measurements with respect to the already-shared QRFs are sufficient.  In the device-independent quantum key distribution protocol of~\cite{vazirani:14}, for example, Alice and Bob share \emph{a priori} a Cartesian frame with respect to which Bell tests can be made.  This protocol clearly fails if Alice and Bob share no QRFs \emph{a priori} and the adversary is allowed to intercept and manipulate any QRFs they attempt to exchange.

\textbf{Example 2:}  Transferring a qubit is transferring nonfungible information.  Using a transferred qubit to encode information requires a previously-shared QRF, e.g., a Cartesian frame for spin measurements.  ``Direct'' communication protocols in which qubits are serially transferred~\cite{qui:11, qui:14} therefore require previously-shared QRFs.  Similar considerations apply when tranferring a qubit in time (see~\cite{fields:18} for details).  Either multiple, serial measurements or multiple, serial preparations (e.g.,~\cite{qui:15}) of a qubit require an \emph{a priori} time-invariant QRF that implements a fixed measurement and/or preparation basis.

\textbf{Example 3:}  Any fungible information sent by Alice to Bob must be physically encoded~\cite{landauer:99}.  The physical system encoding the fungible information must itself be exchanged, so is effectively a~QRF.  To access the encoded fungible information, the recipient must measure the state of this system, as noted already in the case of the classical communication step in Bell measurements~\cite{tipler:14}.  Hence QRF-exchange protocols that assume independent fungible information exchange are technically circular.  Identifying an encoded message as a message is nontrivial in the absence of an \emph{a priori} shared language, as the celebrated \emph{Voyager} probe communications of 1977 exemplify.  A shared language, however, can also be considered a shared QRF, as shown in the special case of measurement operators below.

\section{Implemented System Identification Operators Are Nonfungible}

Observers identify physical systems by deploying system-identification operators.  Such operators are \emph{implemented} by the physical structures of the observers.  Given Lemma 1, it is easy to see the~following:

\begin{Theorem}
An observer cannot determine, by finite observation, what operators identify a physical system $X$.
\end{Theorem}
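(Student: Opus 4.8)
The plan is to reduce Theorem 2 directly to Lemma 1 by identifying ``knowing which operators identify $X$'' with ``knowing the action of the complementary operator $M^{\overline{X}}$ on every degree of freedom of $W_k$.'' Recall from the discussion preceding Lemma 1 that a collection $\{M^{Xk}_j\}$ identifies $X$ precisely when three conditions hold: each $M^{Xk}_j$ maps $X$ to $1$; some $M^{Xk}_j$ maps $\overline{X}$ to $0$ (all of them, if the set is minimal and nonredundant); and the predictability-sieve condition (3) holds, so that $X$ remains separable from $\overline{X}$ and hence re-identifiable at successive times. The first two conditions are exactly the statement that the physically implemented operator $M^{\overline{X}}$ satisfies $M^{\overline{X}}(\phi) = 0$ when $\phi$ is a degree of freedom of $X$ and $M^{\overline{X}}(\phi) = 1$ otherwise; specifying the identifying operators is thus equivalent to specifying the function $\phi \mapsto M^{\overline{X}}(\phi)$ on all of $W_k$.

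First I would invoke Lemma 1: the observer cannot determine by finite observation whether an arbitrary $\phi$ is a degree of freedom of $X$, i.e., cannot determine the value $M^{\overline{X}}(\phi)$. Since determining the identifying operators is exactly determining this map, finite observation is insufficient, which is the claim. Second, as a reinforcing remark, I would point out that the third condition --- that the $M^{Xk}_j$ commute with $H_{W_k}$, Equation (3) --- cannot be checked by finite observation either, since (as already noted after Lemma 1) verifying it requires full operational control over $H_{W_k}$, i.e., $N \rightarrow dim(W_k)$; so an observer that somehow ``knew'' a candidate operator set by stipulation still could not confirm that the set picks out a persistent system.

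The main obstacle is ruling out any indirect route by which the observer certifies the full operator set without enumerating all the degrees of freedom of $W_k$. The point to make precise is that the identifying operators are individuated only by their action on $W_k$: two operator sets that agree on every $\phi$ examined so far but disagree on some unexamined $\phi^\prime$ are operationally indistinguishable, and by Lemma 1 there is always such an unexamined $\phi^\prime$. Hence no finite measurement record can pin the operators down --- the situation is the operator-level analogue of the classical state-identification obstruction of Moore (\cite{moore:56} Theorem 2). Finally I would remark that supplementing finite observation with a fungible description of the operators does not help, since by Theorem 1 no fungible description is sufficient to specify the degrees of freedom of $X$; so the conclusion stands against the strongest available resources.
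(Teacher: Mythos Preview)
Your proof is correct and takes essentially the same route as the paper: reduce the claim to Lemma~1. The paper's version is a shade more compact, arguing by contradiction that a finitely determined operator set $\{M_j\}$ would itself constitute a fungible description of $X$, which Lemma~1 forbids; your unpacking via the action of $M^{\overline{X}}$ on each $\phi$ reaches the same contradiction with more explicit detail, and your supplementary remarks on the predictability sieve and Moore's obstruction are consonant with the paper's own commentary following Lemma~1.
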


\begin{proof}
Suppose the opposite: That $A$ can determine by observation that some finite set of operators $\{ M_j \}$ identify $X$.  A formal specification of the set $\{ M_j \}$ of operators is then, operationally, a fungible description of $X$.  By Lemma 1, however, no fungible description of $X$ is sufficient to specify the degrees of freedom of $X$.  Hence the operators $\{ M_j \}$ do not, in fact, identify $X$, contradicting our~assumption.
\end{proof}

Alternatively, Lemma 1 shows that no finite set of observations of an observer is sufficient to determine the degrees of freedom of the observer.  Hence no such set of observations is sufficient to determine what set of system-identification operators the observer's physical structure implements.

This result is in fact obvious: Alice can send fungible descriptions of her operators $\{ M^{XA}_j \}$ and $\{ M^{PA}_l \}$ to Bob, but she cannot send her operators \emph{as physically implemented} to Bob, as this would require sending herself, or some component of herself such as her brain.  Hence \emph{implemented} system identification operators are nonfungible; they cannot be transferred from one observer to another by sending a finite bit string.  They are effectively internal QRFs, inseparable from the physical structure of the observer that deploys them.  Alice's and Bob's passive uncertainty about whether they share system-identification operators as internal QRFs is the source of their passive uncertainty about whether they share any external QRFs.

\section{Uncertainty about Operator Sharing Induces Decoherence}  \label{sec.5}

As shown in~\cite{bartlett:07}, uncertainty about whether a QRF is shared induces decoherence.  Briefly, even if Alice prepares the transferred QRF token in a pure state, Bob cannot determine that it is pure; his~uncertainty about whether his operators $\{ M^{XB}_j \}$ and $\{ M^{PB}_l \}$ pick out the same system as Alice's operators $\{ M^{XA}_j \}$ and $\{ M^{PA}_l \}$ is equivalent to uncertainty about system preparation.  Bob's measured pointer state $|X_B \rangle$ with respect to his local frame is, therefore, mixed even if Alice's preparation of $|X_A \rangle$ is pure with respect to her local frame.

The local operational indistinguishability between active and passive uncertainty about reference-frame sharing renders environmental and reference-frame induced decoherence locally operationally indistinguishable.  To see this, consider the situation as represented by quantum Darwinism~\cite{zurek:06, zurek:09}.  Here multiple observers interact with multiple partitions $E^k$ of the environment $E$ of some system $S$.  Environmental decoherence acting at the $S-E$ boundary redundantly encodes the eigenvalues of the interaction $H_{SE}$ into each of the $E^k$; the observers obtain this information by mutually-nondisturbing interactions with their respective $E^k$ as sketched in Figure~\ref{fig.3}.  Agreement among the observers about the state of $S$ is enforced by the redundant encoding, which is in turn enforced by the ``einselection'' of the eigenvalues of $H_{SE}$ as the only stably-encodable classical information.  Stability is defined by a prediction sieve requirement, $[H_S + H_{SE}, P_\psi] = 0$~\cite{zurek:03}, where $P_\psi$ projects the stable, and hence observable, state $\psi$ of $S$.  

\begin{figure}[h]
\centerline{\includegraphics[width=4.0in]{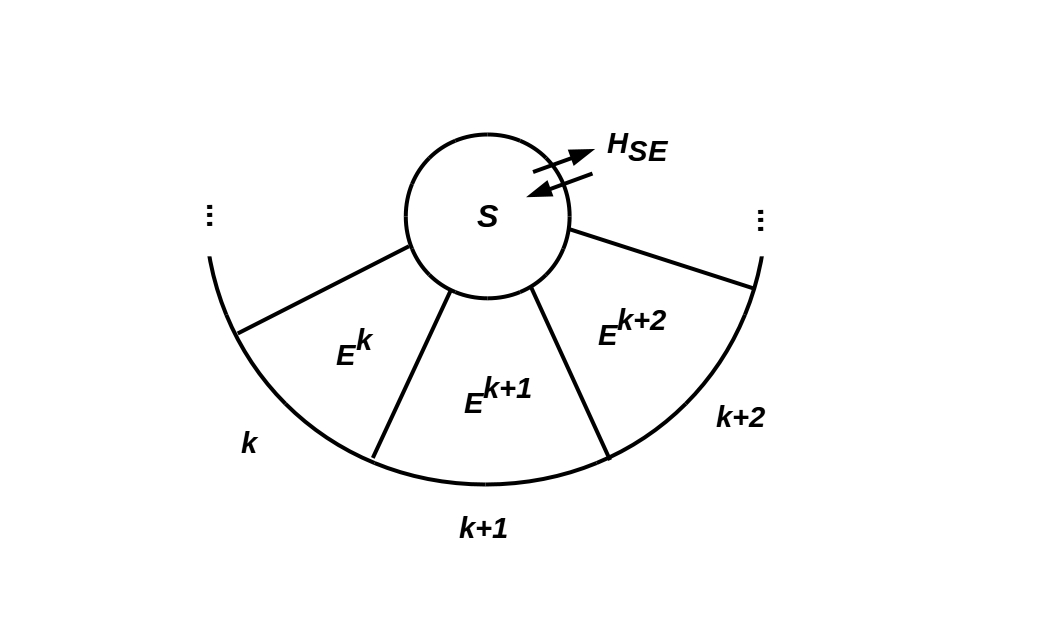}}
\vspace*{8pt}
\caption{Quantum Darwinism~\cite{zurek:06, zurek:09}: Observers $k$ interact with disjoint partitions $E^k$ of the environment $E$ of $S$, each of which encodes the eigenvalues of $H_{SE}$.  Redundancy of encoding and hence agreement among the observers $k$ clearly requires that $H_{SE}$ be independent of the partitioning of $E$ into the $E^k$.}
\label{fig.3}
\end{figure}

In this situation, the environmental partitions $E^k$ each serve as QRFs that their respective observers $k$ employ to measure the state of $S$.  As the observers are by assumption mutually isolated, they share no other QRFs for their measurements of $S$.  As no pair of observers shares a reference frame, they are limited to the exchange of fungible information, i.e., formal specifications of their local interactions $H_{SE^k}$ and observational outcomes.  The latter will agree if the $H_{SE^k}$ are uniform within the measurement resolution, i.e., provided the eigenvalues of $H_{SE}$ are independent of the partitioning of $E$ into the $E^k$.  If~this requirement is violated, the predictability sieve fails, the encoding of eigenvalues is not redundant, and the observers $k$ have no basis for claiming to observe the same system $S$.  The observers cannot, however, operationally determine that the eigenvalues of $H_{SE}$ are independent of the partitioning of $E$ while each restricted to their own partition, and hence QRF $E^k$~\cite{fields:10}.  They cannot, therefore, operationally determine that they are observing the same state $\psi$ or even the same system $S$.

\section{Conclusions}

It is a standard assumption of all empirical science that observers are interchangeable.  Implicitly, this is an assumption that observers in fact implement the same measurement operators, and hence the same ``internal'' QRFs.  If this is the case, they share nonfungible information \emph{a priori}, and hence can share additional nonfungible information implemented by exchangable QRFs.  

Finite observers cannot, however, operationally determine that they implement the same measurement operators, and hence cannot operationally determine that they are interchangable.  They cannot, therefore, operationally determine that they share an exchangable QRF.  This uncertainty induces decoherence.  We suggest that the ``shared classical world'' that must be assumed to describe experiments as independently reproducible results from this decoherence.

%%%%%%%%%%%%%%%%%%%%%%%%%%%%%%%%%%%%%%%%%%
\section*{Acknowledgements}
This research was funded by the Federico and Elvia Faggin Foundation (CF), the Shanghai Municipality, Grant No. KBH1512299 (AM), and by Fudan University, Grant No. JJH1512105 (AM).

\end{document}